\documentclass[pdflatex,sn-aps]{sn-jnl}

\usepackage{graphicx}%
\usepackage{multirow}%
\usepackage{amsmath,amssymb,amsfonts}%
\usepackage{amsthm}%
\usepackage{mathrsfs}%
\usepackage[title]{appendix}%
\usepackage{textcomp}%
\usepackage{manyfoot}%
\usepackage{booktabs}%
\usepackage{algorithm}%
\usepackage{algorithmicx}%
\usepackage{algpseudocode}%
\usepackage{listings}%

\usepackage{xcolor,mathtools}
\def\BibTeX{{\rm B\kern-.05em{\sc i\kern-.025em b}\kern-.08em
    T\kern-.1667em\lower.7ex\hbox{E}\kern-.125emX}}

\usepackage{tikz}
\usetikzlibrary{plotmarks}
\usepackage{pgfplots}
\usetikzlibrary{calc}
\usetikzlibrary{shapes,arrows}
\usetikzlibrary{decorations.markings}
\usetikzlibrary{positioning}
\pgfplotsset{compat=1.10}
\usetikzlibrary{calc}
\usetikzlibrary{shapes,arrows}
\usetikzlibrary{decorations.markings}
    	
\usepackage{color, colortbl}
\usepackage{caption}
\DeclareMathOperator{\erf}{erf} 
\DeclareMathOperator{\maxtr}{maxtr}
\DeclareMathOperator*{\argmax}{arg\,max}

\theoremstyle{thmstyleone}%
\newtheorem{theorem}{Theorem}

\theoremstyle{thmstyletwo}%
\newtheorem{remark}{Remark}%

\theoremstyle{thmstylethree}%
\newtheorem{definition}{Definition}%

\begin{document}

\title[Randomized Distributed Function Computation (RDFC): Ultra-Efficient Semantic Communication Applications to Privacy]{Randomized Distributed Function Computation (RDFC): Ultra-Efficient Semantic Communication Applications to Privacy}

\author[1,2]{\fnm{Onur} \sur{G{\"u}nl{\"u}}}\email{onur.guenlue@tu-dortmund.de}

\affil[1]{\orgdiv{Lehrstuhl f{\"u}r Nachrichtentechnik}, \orgname{Technische Universit{\"a}t Dortmund}, \country{Germany}}
\affil[2]{\orgdiv{Information Coding Division}, \orgname{Link{\"o}ping University}, \country{Sweden}}

\abstract{
We establish the randomized distributed function computation (RDFC) framework, in which a sender transmits just enough information for a receiver to generate a randomized function of the input data. Describing RDFC as a form of semantic communication, which can be essentially seen as a generalized remote‑source‑coding problem, we show that security and privacy constraints naturally fit this model, as they generally require a randomization step. Using strong coordination metrics, we ensure (local differential) privacy for every input sequence and prove that such guarantees can be met even when no common randomness is shared between the transmitter and receiver.

This work provides lower bounds on Wyner's common information (WCI), which is the communication cost when common randomness is absent, and proposes numerical techniques to evaluate the other corner point of the RDFC rate region for continuous‑alphabet random variables with unlimited shared randomness. Experiments illustrate that a sufficient amount of common randomness can reduce the semantic communication rate by up to two orders of magnitude compared to the WCI point, while RDFC without any shared randomness still outperforms lossless transmission by a large margin. A finite blocklength analysis further confirms that the privacy parameter gap between the asymptotic and non-asymptotic RDFC methods closes exponentially fast with input length. Our results position RDFC as an energy-efficient semantic communication strategy for privacy‑aware distributed computation systems.
}

\keywords{randomized distributed function computation (RDFC), ultra-efficient semantic communication, strong coordination, energy-efficient distributed private computation.}

\maketitle

\section{Introduction}\label{sec1}
The need to transmit a message's intended meaning rather than its raw bits has led to the framework of semantic communication \cite{PetarSemantic,GunduzSemantic}. In this framework, a reconstructed signal is evaluated by a measure of quality tied to its semantics \cite{GunduzBITS}. Unlike conventional physical‑layer schemes that are oblivious to content, semantic communication transmits only the information that is relevant. Transmitting objects and their relative positions in an image, instead of the full pixel map, is a semantic communication example \cite{GunduzBITS}. Since the semantic description at the transmitter can be regarded as a function of the data, semantic communication can naturally be modeled as a remote (hidden) source‑coding problem, where what is transmitted is not the data itself but a function of it that is unknown to the transmitter \cite{DobrushinRemote,BergerBook,Csiszarbook}. This formulation generalizes the lossy source coding problem, allowing the fidelity measure to be tailored to semantic relevance.

We address a broader remote‑source‑coding scenario in which a randomized function of the data is used for distributed function computation. The receiver, thus, evaluates a function whose inputs are randomized versions of the transmitters' observations, establishing the \emph{randomized distributed function computation (RDFC) framework with semantic communications}. As security and privacy methods generally require randomization, such constraints are natural use-cases for RDFC \cite{OurSecTutorial,InterativeSecureComp,GustafISIT2025}. Specifically, we study privacy‑constrained RDFC problems, where the transmitter seeks to let the receiver remotely simulate a sequence so that the input and output sequences follow a {given} target joint distribution that preserves privacy. Minimizing the communication load required for such remote simulation is known in the literature as a \emph{coordination problem}, also called distributed channel synthesis or simulation \cite{CuffChannelSynthesis,OnurEUCNC2025,ReverseShannon,HarshaOriginalOneShot,GerhardChannelSimulation}. Unlike recent work that often targets \emph{empirical (weak) coordination} for machine‑learning and neural compression tasks \cite{HavasiMRC,PeterLDP,Khisti,GunduzAdaptive}, we focus on \emph{strong coordination} measures \cite{CuffChannelSynthesis} to design RDFC methods. Thus, RDFC provides coordination guarantees for each computation instance, rather than for the average behavior, by imposing a joint typicality constraint, crucial for robust security and privacy to not enable new attacks.

{
\subsection{Summary of Contributions}
Unlike deterministic function computation tasks, \emph{randomized} function computation problems experience a significant reduction in the required communication load when transmitters and receivers share common randomness \cite{MACChannelSimulation}. In a point-to-point setting, the coordination-randomness rate region has two corner points. Without common randomness, the minimum communication rate corresponds to Wyner's common information (WCI) $C(\widetilde{X};Y)=\inf_{U: \widetilde{X}-U-Y} I(\widetilde{X},Y;U)$ between the channel input $\widetilde{X}$ and output $Y$ \cite{WCI}. With enough common randomness, the minimum communication rate reduces to the mutual information $I(\widetilde X;Y)$. Building on this observation, the main contributions of this work include the following:

\begin{itemize}
  \item We introduce and formalize the RDFC framework as a semantic communication model for privacy-constrained distributed computation, which is an extension of the remote source coding problem. Within a strong coordination setting, we interpret privacy mechanisms as randomized functions of given input sequences, unlike in existing randomized privacy mechanisms, and identify the two corner operating points of the coordination-randomness rate region for such privacy problems.

  \item We analyze a continuous-alphabet RDFC scenario in which the randomized output must satisfy a local differential-privacy (LDP) constraint. For a clipped Gaussian input and a Gaussian $(\epsilon,\delta)$-LDP mechanism, we impose symmetry to derive a lower bound on the WCI and develop a numerical procedure to evaluate the mutual information $I(\widetilde X;Y)$. The resulting comparisons quantify the communication-rate savings achievable due to common randomness under a LDP constraint and reveal parameter regimes where the new WCI lower bound improves upon generic lower bounds. We also consider parameter sets where one can identify the individual effect of the the LDP parameter $\epsilon$.

  \item We study a discrete RDFC scenario based on a symmetric random response mechanism that effectively combines several randomized bit responses. For a family of binary symmetric channel (BSC)-mixture models, we derive a lower bound on the WCI using symmetry and Witsenhausen-type techniques and compare it with the mutual information $I(\widetilde X;Y)$, which itself is a lower bound on the WCI. The resulting examples demonstrate that common randomness can substantially reduce the semantic communication rate, while RDFC without any shared randomness still significantly outperforms lossless transmission in terms of communication load. These communication rate gains indicate substantial improvements in energy efficiency.
\end{itemize}

A subset of these results were reported in the conference version of this work in \cite{MyWIFS2024}. Moreover, in this work, we introduce the following contribution:
\begin{itemize}
  \item We establish a finite-blocklength analysis for RDFC under LDP constraints. Specifically, for any rate $R > I(\widetilde X;Y)$ in the interior of the coordination-randomness rate region, we provide explicit bounds on the RDFC performance of random encoder-decoder pairs. This fundamental result, which is in the form of error-exponent analysis, yields achievable LDP parameters $(\epsilon,\delta_n)$ with $\delta_n \to \delta$ at an exponential rate in the blocklength $n$, showing that non-asymptotic RDFC encoder-decoder pairs with sufficient common randomness can approach the asymptotic privacy guarantees with exponentially vanishing loss.
\end{itemize}

These results illustrate that using RDFC encoder-decoder pairs, instead of first introducing randomness and then compressing the randomized output, one has a principled way to design privacy-aware distributed computation systems with significantly smaller communication load, leading to substantial energy savings.

}

\subsection{Paper Organization}
The paper is organized as follows. Section~\ref{sec:preliminaries} introduces the RDFC framework that yields per‑instance function computation guarantees and the LDP metric {for discrete random variables}. Section~\ref{sec:LDP} establishes a lower bound on the WCI and a numerical procedure for evaluating the mutual information $I(\widetilde{X};Y)$ when continuous‑valued random variables are subject to an LDP constraint. Section~\ref{sec:nonasympLDP} establishes achievable LDP parameters for finite‑length RDFC encoder–decoder pairs for given target LDP parameters. Section~\ref{sec:SymmetricforRR} considers a random response mechanism and derives a lower bound on the WCI for a family of discrete symmetric channels. Section~\ref{sec:conclusions} concludes the paper.

{
\subsection{Notation}
Random variables are denoted by uppercase letters $X$ and their realizations by lowercase letters $x$. A random variable $X$ follows either a probability mass function (pmf) $P_X$ with support $\operatorname{supp}(P_X)$ or has a probability density function (pdf) $p_X$. Sets are represented using calligraphic letters $\mathcal{X}$ and their cardinality is written as $|\mathcal{X}|$. Sequences of length $n$ are expressed as $X^n = (X_1, X_2, \ldots,X_i,\ldots,  X_n)$. 

The total variation (TV) distance between two distributions $P_Y$ and $P_X$ is defined as
\begin{align}
    \left\|P_Y-P_X\right\|_{\text{TV}} \triangleq \frac{1}{2}\sum_{b\in \mathcal{B}} |P_Y(b) - P_X(b)|.
\end{align}

Define the information density 
\begin{align}
    \imath_{\widetilde{X};Y}(\widetilde{x},y)= 
\log \frac{\mathrm d P_{\widetilde{X}Y}}{\mathrm d P_{\widetilde{X}} P_Y}
\end{align}
where $\frac{\mathrm d P_{\widetilde{X}Y}}{\mathrm d P_{\widetilde{X}} P_Y}$ is the Radon-Nikodym derivative of $P_{\widetilde{X}Y}$ with respect to $P_{\widetilde{X}} P_Y$. Similarly, define the $\alpha$-mutual information, for any $\alpha>1$,
\begin{align}
    I_{\alpha}(\widetilde{X};Y)
  =\frac{\alpha}{\alpha-1}\,
   \log\,
   \mathbb{E}\!\Bigl[
       \mathbb{E}^{\frac{1}{\alpha}}\!\Bigl[
           e^{\alpha\imath_{\widetilde{X};Y}(\widetilde{X},\widetilde{Y})}
           \Bigm|
           \widetilde{Y}
       \Bigr]
   \Bigr]
\end{align}
where $(\widetilde{X},\widetilde{Y})\sim P_{\widetilde{X}} P_Y$. 
}

\section{Preliminaries}\label{sec:preliminaries}
\subsection{Randomized Distributed Function Computation (RDFC): Semantic Communications via Strong Coordination}
The RDFC framework, leveraging strong coordination methods, seeks to coordinate the sequences of multiple nodes through channels while transmitting only the minimum necessary amount of information. Because coordination is established by exhibiting, in those local sequences, a joint behavior summarized by {any given} target joint probability distribution that depends on each node's goals as well as computation and communication resources, reliable communication of the raw local sequences to other nodes is unnecessary. We remark that RDFC methods can also be implemented by using low‑complexity autoencoders as encoder-decoder pairs \cite{DidrikISIT2025}.

\begin{figure}[t!]
\centering
\ttfamily
\resizebox{0.75\linewidth}{!}{
		\begin{tikzpicture}
		\node (input) at (0,0) {$\widetilde{X}^n$};
		\node (cr) at (2.5,-3) {$C\in\{1,2,\ldots,2^{nR_0}\}$};
		\node (enc) at (0,-1.5) [draw,fill=blue!15!white, rounded corners = 3pt, minimum width=1.2cm,minimum height=0.6cm, align=left] {Enc};
		\node (dec) at (5,-1.5) [draw,fill=blue!15!white,rounded corners = 3pt, minimum width=1.2cm,minimum height=0.6cm, align=left] {Dec};
		\node (output) at (5,0) {$Y^n$};
		\node (const) at (7.2,-0.048) {s.t. $(\widetilde{X}^n,Y^n)\sim Q^n_{\widetilde{X}Y}$};
		\draw[decoration={markings,mark=at position 1 with {\arrow[scale=1.5]{latex}}},
		postaction={decorate}, thick, shorten >=1.4pt] (input.south) -- (enc.north);
		\draw[decoration={markings,mark=at position 1 with {\arrow[scale=1.5]{latex}}},
		postaction={decorate}, thick, shorten >=1.4pt] (enc.east) -- (dec.west) node [above, midway] {$S\in \{1,2,\ldots,2^{nR}\}$};
		\draw[decoration={markings,mark=at position 1 with {\arrow[scale=1.5]{latex}}},
		postaction={decorate}, thick, shorten >=1.4pt] (dec.north) -- (output.south);
		\draw[decoration={markings,mark=at position 0.9999 with {\arrow[scale=1.5]{latex}}},
		postaction={decorate}, thick, shorten >=0.4pt] (cr.west) to [out=180,in=270] (enc.south);
		\draw[decoration={markings,mark=at position 0.9999 with {\arrow[scale=1.5]{latex}}},
		postaction={decorate}, thick, shorten >=0.4pt] (cr.east) to [out=0,in=270] (dec.south);
		\end{tikzpicture}
	} \caption{A two–node RDFC model where both nodes may share common randomness $C$. Within the strong coordination framework, the receiver uses the transmitted index $S$ and $C$ to output a sequence $Y^{n}$. This procedure ensures that every pair $(\widetilde{x}^{n},y^{n})$ follows {any} target joint distribution $Q^{n}_{\widetilde{X}Y}$ given in Eq.~(\ref{eq:coordinationconstraint}) below, enabling a cooperation strategy that benefits both nodes. The design goal focuses on minimizing the channel rate $R$ to reduce function computation latency and energy consumption.}
    \label{fig:strongcoordsimpleproblem} 
        \vspace*{-0.1cm}
\end{figure}
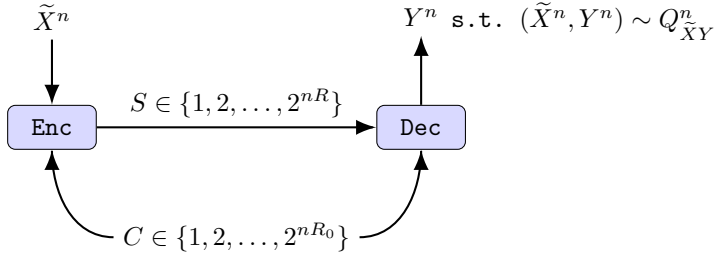

Consider the two‑user RDFC model illustrated in Fig.~\ref{fig:strongcoordsimpleproblem}. After observing a sequence $\widetilde{X}^n=\widetilde{x}^n$, the transmitter jointly encodes $\widetilde{x}^n$ together with common randomness $C\in\{1,2,\ldots,2^{nR_0}\}$, shared between the transmitter and receiver and obtainable by using physical unclonable functions \cite{bizimGlobalSIP}. This encoding produces an index $S\in\{1,2,\ldots,2^{nR}\}$; the transmitter sends $S$ over a noiseless link to the receiver, which uses $S$ and $C$ to synthesize a sequence $Y^n=y^n$ such that $(\widetilde{x}^n,y^n)\sim Q^n_{\widetilde{X}Y}$. We next define the coordination–randomness region for the two‑user RDFC model shown in Fig.~\ref{fig:strongcoordsimpleproblem}{, where we consider discrete random variables.}

\begin{definition}\label{def:regiondef}
    A rate pair $(R,R_{0})$ is \emph{achievable} for $Q_{\widetilde{X}Y}$ if, for any $\epsilon_{n}>0$, there exist a blocklength $n\ge 1$ and a matching encoder–decoder pair such that
    \begin{align}
        \lim_{n \to \infty} \left\| P_{\widetilde{X}^nY^n}{({\widetilde{x}^n,y^n})} - \prod_{i=1}^n Q_{\widetilde{X}Y}{({\widetilde{x}_i,y_i})} \right\|_{\text{TV}} = 0\label{eq:coordinationconstraint}
    \end{align}
    where $P_{\widetilde{X}^{n}Y^{n}}$ denotes the distribution induced by that pair.

    The \emph{coordination–randomness region} $\mathcal{R}_{\text{RDFC}}$ is the closure of all achievable $(R,R_{0})$ tuples for the two‑user RDFC model given in Fig.~\ref{fig:strongcoordsimpleproblem}. \hfill $\lozenge$
\end{definition}

We next provide the coordination-randomness region $\mathcal{R}_{\text{RDFC}}$.

\begin{theorem}[{\hspace{1sp}\cite[Theorem~II.1]{CuffChannelSynthesis}}]\label{theo:CoordinationRegion}
    The coordination-randomness region $\mathcal{R}_{\text{RDFC}}$ is the union over all $P_{\widetilde{X}UY} = P_UP_{\widetilde{X}|U}P_{Y|U}$ of the rate pairs $(R,R_0)$ satisfying 
    \begin{align}
        & R\geq I(\widetilde{X};U),\label{eq:Rbound}\\
        & R+ R_0 \geq I(\widetilde{X},Y;U),\label{eq:RR0bound}\\
        & \sum_{u\in\mathcal{U}}P_{\widetilde{X}UY} = Q_{\widetilde{X}Y}.\label{eq:preserve_QXtildeY}
    \end{align}
    
\end{theorem}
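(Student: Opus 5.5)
The region in Theorem~\ref{theo:CoordinationRegion} is Cuff's channel-synthesis region. The proof has two parts: achievability via soft covering (a resolvability argument combined with a likelihood encoder), and a converse via a time-sharing auxiliary variable.

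\emph{Achievability.} Fix $P_U P_{\widetilde{X}|U}P_{Y|U}$ whose $(\widetilde{X},Y)$-marginal equals $Q_{\widetilde{X}Y}$, as required by (\ref{eq:preserve_QXtildeY}).
\begin{itemize}
\item Generate a codebook of $2^{n(R+R_0)}$ sequences $u^n(s,c)$ i.i.d.\ according to $P_U$. Define an idealized distribution $\Gamma$ by drawing $(S,C)$ uniformly, then passing $u^n(S,C)$ through the memoryless channels $P^n_{\widetilde{X}|U}$ and $P^n_{Y|U}$ independently.
\item By the soft-covering lemma, if $R+R_0> I(\widetilde{X},Y;U)$, the $(\widetilde{X}^n,Y^n)$-marginal of $\Gamma$ is close in TV to $Q^n_{\widetilde{X}Y}$ in expectation over codebooks.
\item By the same lemma, if $R_0> I(\widetilde{X};U)$ fails but $R> I(\widetilde{X};U)$ holds, the $\widetilde{X}^n$-marginal of $\Gamma$ given each fixed $C=c$ is close to $P^n_{\widetilde{X}}$. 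Hence $\Gamma_{C\widetilde{X}^n}\approx \mathrm{Unif}(C)\times P^n_{\widetilde{X}}$.
\item The actual scheme uses the likelihood encoder: given $(\widetilde{x}^n,c)$, pick $S$ with probability $\Gamma_{S|\widetilde{X}^nC}$. The decoder passes $u^n(S,C)$ through $P^n_{Y|U}$.
\item The induced distribution differs from $\Gamma$ only in the $(C,\widetilde{X}^n)$ marginal. Since TV is preserved under the same conditional kernel, the total error is at most the sum of the two soft-covering errors, which vanish as $n\to\infty$. A standard random-coding argument then fixes a deterministic codebook.
\end{itemize}
The endpoints follow by taking closure.

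\emph{Converse.} Suppose a code achieves (\ref{eq:coordinationconstraint}). Let $T$ be uniform on $\{1,\ldots,n\}$ and independent of everything else.
\begin{itemize}
\item For the rate bound: $nR\ge H(S)\ge I(\widetilde{X}^n;S,C)$, using $C\perp \widetilde{X}^n$. This is $\sum_t I(\widetilde{X}_t;S,C,\widetilde{X}^{t-1}) \ge \sum_t I(\widetilde{X}_t;S,C)$, since $\widetilde{X}^n$ is i.i.d.
\item For the sum-rate bound: $n(R+R_0)\ge H(S,C)\ge I(\widetilde{X}^n,Y^n;S,C)$. This is at least $\sum_t I(\widetilde{X}_t,Y_t;S,C) - n\,o(1)$. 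The correction term uses the fact that $P_{\widetilde{X}^nY^n}$ is TV-close to i.i.d., so $\sum_t H(\widetilde{X}_t,Y_t)-H(\widetilde{X}^n,Y^n)\le n\,o(1)$ by continuity of entropy for TV-close distributions (Cuff's Lemma).
\item Set $U=(S,C,T)$. Then $\widetilde{X}_t - (S,C) - Y_t$ holds because the decoder sees only $(S,C)$, so $\widetilde{X}_T-U-Y_T$ is Markov. Moreover, $I(\widetilde{X}_T;U)\ge \frac1n\sum_t I(\widetilde{X}_t;S,C)$ because $\widetilde{X}_T$ is independent of $T$. For the joint pair, $I(\widetilde{X}_T,Y_T;T)\to 0$ as the marginals become close to $Q$.
\item The joint law of $(\widetilde{X}_T,Y_T)$ is TV-close to $Q_{\widetilde{X}Y}$.
\end{itemize}

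\emph{Main obstacle.} The hardest step is the converse. It requires two things: the entropy-continuity bound above, and a cardinality bound on $\mathcal{U}$ obtained via Carath\'eodory. The cardinality bound makes the set of admissible $P_{\widetilde{X}UY}$ compact, so that a limit point exactly satisfies (\ref{eq:preserve_QXtildeY}) and the Markov chain, while the mutual informations are continuous. This closure yields the region as stated. The argument is exactly that of \cite{CuffChannelSynthesis}, which the statement cites.
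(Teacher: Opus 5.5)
Your proposal is correct and is essentially Cuff's own proof: soft-covering with a likelihood encoder for achievability, and a time-sharing converse with a Carath\'eodory cardinality bound and a compactness limit. The paper supplies no proof of its own and simply cites \cite[Theorem~II.1]{CuffChannelSynthesis}. One small slip: in the converse you need $I(\widetilde{X}_T;U)\le \frac1n\sum_t I(\widetilde{X}_t;S,C)$, and independence of $\widetilde{X}_T$ and $T$ gives exactly this (in fact equality), whereas the $\ge$ you wrote holds trivially by the chain rule and does not yield $R\ge I(\widetilde{X}_T;U)$ (also, the clause about $R_0> I(\widetilde{X};U)$ failing in the achievability is irrelevant, since only $R>I(\widetilde{X};U)$ matters there).
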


{
\begin{remark}
    The random variable $U$ in (\ref{eq:Rbound}) and (\ref{eq:RR0bound}) is any auxiliary random variable that satisfies $P_{\widetilde{X}UY} = P_UP_{\widetilde{X}|U}P_{Y|U}$ and (\ref{eq:preserve_QXtildeY}). For each such auxiliary random variable $U$, one can evaluate the lower bounds (\ref{eq:Rbound}) and (\ref{eq:RR0bound}) to obtain a valid operating tuple $(R,R_0)$. Thus, the union of all achievable operating tuples, obtained by considering all valid auxiliary random variables, establishes the coordination-randomness region $\mathcal{R}_{\text{RDFC}}$.
\end{remark}

}

Theorem~\ref{theo:CoordinationRegion} remains valid to describe the coordination–randomness region for the continuous‑valued random variables analyzed below. A standard discretization argument \cite[Remark~3.8]{Elgamalbook} carries the achievability proof over to well‑behaved continuous alphabets, including the (truncated) Gaussian distributions considered in this work; see also \cite[Sec.~VII]{CuffChannelSynthesis}.

Within the coordination-randomness region, Theorem~\ref{theo:CoordinationRegion} identifies two corner points.
 
First, consider the case $R_{0}=0$, meaning the transmitter and receiver do not share any common randomness. Combining \eqref{eq:Rbound} and \eqref{eq:RR0bound} yields the first corner point
\begin{align}
    R\geq C(\widetilde{X};Y)=\inf_{U:\; \widetilde{X}-U-Y} I(\widetilde{X},Y;U)\label{eq:WCIbound}
\end{align}
where $C(\widetilde{X};Y)$ denotes the WCI between $\widetilde{X}$ and $Y$ \cite{WCI}. Note that in many multi‑user information theory results, one optimizes over auxiliary random variables $V$ that satisfy Markov chains like $V-\widetilde{X}-Y$ and several techniques exist for that task. Here, however, the auxiliary random variable $U$ forms the Markov chain $\widetilde{X}-U-Y$, making the optimization of $P_{\widetilde{X}UY}$ considerably harder \cite{WCIMethod}. We derive lower bounds on $C(\widetilde{X};Y)$ for symmetric distributions of practical relevance.

Second, consider the opposite extreme, where the two terminals have sufficient common randomness $C$. Under this assumption, \eqref{eq:Rbound} and \eqref{eq:RR0bound} yield the second corner point
\begin{align}
    R\geq I(\widetilde{X};Y)\label{eq:MIbound}
\end{align}
since the data‑processing inequality implies $I(\widetilde{X};U)\geq I(\widetilde{X};Y)$. One can achieve the corner point in \eqref{eq:MIbound} with $R_{0}\ge H(Y|X)$, although \cite{CuffCoordinationCapPaper} shows that some scenarios need even less common randomness. The bound in \eqref{eq:MIbound} recovers the reverse Shannon theorem \cite{ReverseShannon}, which shows that an unlimited amount of common randomness and a noiseless channel suffice to simulate a noisy channel. This is the reverse of the operation in the channel coding theorem in Shannon's seminal work \cite{Shannon1948}.

From \cite{WCI}, we have the chain of inequalities  
\begin{align}
    I(\widetilde{X};Y)\leq C(\widetilde{X};Y)\leq \min\{H(\widetilde{X}),\; H(Y)\}\label{eq:MIlessthanWCI}
\end{align}
which underscores the significant communication rate savings of RDFC. Even in the absence of common randomness, an RDFC method can surpass both standard baselines: (i) Transmitting the sequence $\widetilde{X}^{n}$ losslessly at rate $H(\widetilde{X})$ and letting the receiver compute the randomized function output $Y^{n}$, and (ii) Computing $Y^{n}$ at the transmitter and then transmitting it losslessly at rate $H(Y)$.

\subsection{Local Differential Privacy}
LDP, which we consider to guarantee individual user privacy, naturally forms an RDFC problem, because the privacy guarantee arises from computing a randomized function of the input.

Next, we define $(\epsilon,\delta)$--LDP, often called \emph{approximate} LDP as $\delta\neq 0$. Let $\mathcal{M}(\cdot)$ be a randomized mechanism that assigns every $\widetilde{x}\in\mathcal{\widetilde{X}}$ to a probability distribution $\mathcal{M}(\cdot|\widetilde{x}) \in \mathcal{P}(\mathcal{Y})$.

\begin{definition}[\hspace{1sp}\cite{FlavioLDP,evfimievski2003limiting, 4690986}]
    A mechanism $\mathcal{M}: \mathcal{\widetilde{X}} \rightarrow \mathcal{P}(\mathcal{Y})$ is $(\epsilon,\delta)$-LDP for $\epsilon \geq 0$ and $\delta \in [0,1]$ if we have
    \begin{align}
        \sup_{\widetilde{x},\widetilde{x}' \in \mathcal{\widetilde{X}}} \sup_{Y \subseteq \mathcal{Y}} \big( \mathcal{M}(Y|\widetilde{x}) - e^\epsilon \mathcal{M}(Y|\widetilde{x}') \big) \leq \delta.
    \end{align} \hfill $\lozenge$
\end{definition}

In Section~\ref{sec:LDP}, we formulate an RDFC problem that synthesizes a Gaussian LDP mechanism $\mathcal{M}(\cdot)$, where by adding independent Gaussian noise, $\mathcal{M}(\cdot)$ satisfies a desired $(\epsilon,\delta)$--LDP guarantee. Due to its low complexity, the Gaussian mechanism often acts as the privacy layer in empirical risk minimization methods that run optimization routines such as stochastic gradient descent or the Adam optimizer \cite{Bassily}.

\section{RDFC for Local Differential Privacy with Additive Gaussian Noise}\label{sec:LDP}
{Next, we consider continuous-valued random variables.} Suppose an i.i.d. Gaussian random variable $X^n\sim \mathcal{N}^n(0,\sigma_x^2)$ clipped to the range $[-C,C]$, where $C>0$. Define, for $a\in R$,
\begin{alignat}{2}
    & \beta = \frac{C}{\sigma_X},\\ 
    &\erf(a) = \frac{2}{\sqrt{\pi}}\int_{0}^a e^{-t^2}dt,\\
    & \phi(a) = \frac{1}{\sqrt{2\pi}}e^{-a^2/2},\\
    &\gamma(a) = \frac{a \phi (a)}{\erf\big(\frac{a}{\sqrt{2}}\big)}.
\end{alignat}
The clipped output $\widetilde{X}^n$ is an i.i.d. truncated Gaussian random variable with zero mean and variance
\begin{align}
    \sigma_{\widetilde{X}}^2 = \sigma_X^2 \Big(1-2\gamma(\beta)\Big).\label{eq:sigmaXsquarevssigmaX}
\end{align}
The transmitter observes the clipped sequence $\widetilde{X}^n$ as its input and aims to achieve $(\epsilon,\delta)$-LDP by considering the Gaussian LDP mechanism {\cite{DworkDP}, such that the mechanism output $Y^n$ observed at the receiver is equal to $\widetilde{X}^n + \widetilde{Z}^n$, where $\widetilde{X}^n$ and $\widetilde{Z}^n$ are independent and we have} 
\begin{align}
    \{(Y^n-\widetilde{X}^n)|\widetilde{X}^n\}\vcentcolon= \{\widetilde{Z}^n|\widetilde{X}^n\}= \widetilde{Z}^n\sim \mathcal{N}^n(0,\sigma_{\widetilde{Z}}^2).
\end{align}
If we have
\begin{align}
    \sigma_{\widetilde{Z}}^2 = \frac{8C^2}{\epsilon^2}\log\big(\frac{1.25}{\delta}\big),
\end{align}
then the $(\epsilon,\delta)$-LDP constraint is satisfied for all $0\leq \epsilon\leq 1$ \cite{DworkDP,RaviLDP}, which follows as the $\ell_2$-sensitivity of the clipped input $\widetilde{X}$ is $2C$. We also obtain
\begin{align}
    \sigma_{Y}^2 = \sigma_{\widetilde{X}}^2 +\sigma_{\widetilde{Z}}^2
\end{align}
which follows as $\widetilde{X}^n$ and $\widetilde{Z}^n$ are independent.

Next, consider the two corner points of the coordination-randomness rate region, given in (\ref{eq:WCIbound}) and (\ref{eq:MIbound}), for the $(\epsilon,\delta)$-LDP problem defined above. Note that both corner points, i.e., $C(\widetilde{X};Y)$ and $I(\widetilde{X};Y)$, are invariant to mean values of $\widetilde{X}$ and $Y$. Define, for $a\in R$,
\begin{alignat}{2}
    & \{a\}^+= \max\{a,0\},\\
    &\Phi(a) = 0.5 \Big(1+ \erf(\frac{a}{\sqrt{2}})\Big),\\
    & \bar{\beta} = \frac{C}{\sigma_{\widetilde{X}}^2},\\
    & m(a)=\frac{a}{\sigma_Y}.
\end{alignat}

Now, we provide a lower bound on $C(\widetilde{X};Y)$ and a numerical calculation method for $I(\widetilde{X};Y)$.

\begin{theorem}\label{theo:LDPWCIandMIbounds}
    Suppose an RDFC model with a clipped Gaussian input in the range $[-C,C]$ under an $(\epsilon,\delta)$-LDP constraint, imposed via $Q_{\widetilde{X}Y}$, satisfied by using a Gaussian LDP mechanism. We obtain the following lower bound on $C(\widetilde{X};Y)$:
    \begin{align}
        &C(\widetilde{X};Y)\geq \Bigg\{0.5\log\Bigg(1 + \frac{2 \sigma_{\widetilde{X}}}{\sigma_Y - \sigma_{\widetilde{X}}}\Bigg) + \log\Bigg(\frac{\erf\big(\frac{\beta}{\sqrt{2}}\big)}{\sqrt{1-2\gamma(\beta)}}\Bigg) - \gamma(\beta)\Bigg\}^+.\label{eq:LoweronWCI}
    \end{align}
    Furthermore, $I(\widetilde{X};Y)$ can be computed numerically as
    \begin{align}
        I(\widetilde{X};Y) = -\mathbb{E}_{p_Y}[\log(p_Y(Y))] - 0.5\log(2\pi e\sigma_{\widetilde{Z}}^2)\label{eq:MInumeric}
    \end{align}
    where we have the pdf, for $y\in (-\infty,\infty)$,
    \begin{align}
    \displaystyle p_Y(y) = \frac{\phi(m(y))    \left[\displaystyle \Phi \!\left(\frac{\bar{\beta}\sigma_{Y}^2 \!-\! \sigma_{\widetilde{X}} y}{\sigma_{\widetilde{Z}}\sigma_Y}\right) \!-\! \Phi \!\left( - \frac{\bar{\beta}\sigma_{Y}^2 + \sigma_{\widetilde{X}} y}{\sigma_{\widetilde{Z}}\sigma_Y}\right) \right]}{\displaystyle\erf\big(\frac{\bar{\beta}}{\sqrt{2}}\big)\sigma_Y }.\label{eq:pdfofpY}
    \end{align}
\end{theorem}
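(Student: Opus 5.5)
The plan is to prove the two parts separately. For the WCI bound, I would first recognize the terms in \eqref{eq:LoweronWCI}. Write $\rho=\sigma_{\widetilde{X}}/\sigma_Y$; this is the correlation coefficient of $(\widetilde{X},Y)$, since $\mathrm{Cov}(\widetilde{X},Y)=\sigma_{\widetilde{X}}^2$ by independence of $\widetilde{X}$ and $\widetilde{Z}$. Then the first term is
\begin{align*}
0.5\log\Big(1+\frac{2\sigma_{\widetilde{X}}}{\sigma_Y-\sigma_{\widetilde{X}}}\Big)=0.5\log\frac{1+\rho}{1-\rho},
\end{align*}
which is exactly Wyner's common information of a jointly Gaussian pair with correlation $\rho$. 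For the remaining terms, I would use the standard entropy of a symmetric truncated Gaussian,
\begin{align*}
h(\widetilde{X})=0.5\log(2\pi e\sigma_X^2)+\log\erf\big(\tfrac{\beta}{\sqrt{2}}\big)-\gamma(\beta).
\end{align*}
Combining this with \eqref{eq:sigmaXsquarevssigmaX} shows that $\log\big(\erf(\beta/\sqrt{2})/\sqrt{1-2\gamma(\beta)}\big)-\gamma(\beta)=h(\widetilde{X})-0.5\log(2\pi e\sigma_{\widetilde{X}}^2)\le 0$. This is minus the non-Gaussianness of $\widetilde{X}$. The target is therefore $C(\widetilde{X};Y)\ge C_{\mathrm{G}}(\rho)-D(p_{\widetilde{X}}\|\mathcal{N}(0,\sigma_{\widetilde{X}}^2))$, and the $\{\cdot\}^+$ comes for free because mutual information is nonnegative.

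To obtain this, I would fix any $U$ with $\widetilde{X}-U-Y$ and write
\begin{align*}
I(\widetilde{X},Y;U)=h(\widetilde{X},Y)-h(\widetilde{X}|U)-h(Y|U).
\end{align*}
Since $Y=\widetilde{X}+\widetilde{Z}$, we have $h(\widetilde{X},Y)=h(\widetilde{X})+h(\widetilde{Z})$. This equals the entropy of a jointly Gaussian vector with the same covariance, minus $D(p_{\widetilde{X}}\|\mathcal{N})$, because $\widetilde{Z}$ is already Gaussian. It then remains to show that $h(\widetilde{X}|U)+h(Y|U)$ is at most the value attained in the Gaussian problem with the same covariance. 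I would bound each conditional entropy by Gaussian maximum entropy and Jensen: $h(\widetilde{X}|U)\le 0.5\log(2\pi e\,a)$ with $a=\mathbb{E}[\mathrm{Var}(\widetilde{X}|U)]$, and similarly $h(Y|U)\le 0.5\log(2\pi e\,b)$ with $b=\mathbb{E}[\mathrm{Var}(Y|U)]$.

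The Markov chain implies $\mathrm{Cov}(\widetilde{X},Y|U)=0$. By the law of total covariance, $\sigma_{\widetilde{X}}^2=\mathrm{Cov}(\mathbb{E}[\widetilde{X}|U],\mathbb{E}[Y|U])$, and Cauchy–Schwarz then gives the constraint $\rho^2\sigma_{\widetilde{X}}^2\sigma_Y^2\le(\sigma_{\widetilde{X}}^2-a)(\sigma_Y^2-b)$. Maximizing $ab$ under this constraint should give $ab\le(1-\rho)^2\sigma_{\widetilde{X}}^2\sigma_Y^2$, attained at $a=(1-\rho)\sigma_{\widetilde{X}}^2$ and $b=(1-\rho)\sigma_Y^2$ by an AM–GM/symmetry argument. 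Substituting yields $I(\widetilde{X},Y;U)\ge 0.5\log\frac{(1-\rho^2)}{(1-\rho)^2}-D=0.5\log\frac{1+\rho}{1-\rho}-D$. Taking the infimum over $U$ gives \eqref{eq:LoweronWCI}. I expect this constrained maximization, and checking that the Jensen and maximum-entropy steps are tight enough, to be the main obstacle.

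For \eqref{eq:MInumeric}, I would write $I(\widetilde{X};Y)=h(Y)-h(Y|\widetilde{X})=h(Y)-h(\widetilde{Z})$ and use $h(\widetilde{Z})=0.5\log(2\pi e\sigma_{\widetilde{Z}}^2)$; the first term is $-\mathbb{E}_{p_Y}[\log p_Y(Y)]$, to be evaluated by numerical integration. The density \eqref{eq:pdfofpY} follows from the convolution $p_Y(y)=\int_{-C}^{C}p_{\widetilde{X}}(x)\,\phi\big((y-x)/\sigma_{\widetilde{Z}}\big)/\sigma_{\widetilde{Z}}\,dx$, with $p_{\widetilde{X}}$ the truncated Gaussian density. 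I would complete the square in $x$ to factor out $\phi(y/\sigma_Y)$, which leaves a Gaussian in $x$ integrated over $[-C,C]$ and expressed as a difference of $\Phi$ values. The truncation normalization supplies the $\erf$ term in the denominator, and I would match the constants to the paper's notation $\bar{\beta}$ and $m(\cdot)$.
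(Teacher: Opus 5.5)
Your WCI argument is correct but takes a different route from the paper. The paper does not derive the Gaussian-comparison inequality. It quotes \cite[Theorem~1]{GastparLoweronWCI}, $C(\widetilde{X};Y)\ge\{C(\widetilde{X}_g;Y_g)+h(\widetilde{X},Y)-h(\widetilde{X}_g,Y_g)\}^+$, and then only evaluates the three terms. That evaluation is the same bookkeeping you do: the Gaussian WCI at $\rho=\sigma_{\widetilde X}/\sigma_Y$, $h(\widetilde X_g,Y_g)=\log(2\pi e\sigma_{\widetilde X}\sigma_{\widetilde Z})$, and $h(\widetilde X,Y)=h(\widetilde X)+h(\widetilde Z)$ via the truncated-Gaussian entropy.

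You instead prove the needed scalar case of that theorem directly. You split $h(\widetilde X,Y|U)$ by conditional independence, bound each conditional entropy by maximum entropy and Jensen, and constrain $(a,b)$ via total covariance and Cauchy--Schwarz. The step you worry about is routine. With $s=1-a/\sigma_{\widetilde X}^2$, $t=1-b/\sigma_Y^2$ and $st\ge\rho^2$, one has $ab/(\sigma_{\widetilde X}^2\sigma_Y^2)=(1-s)(1-t)\le(1-\sqrt{st})^2\le(1-\rho)^2$. No tightness of the max-entropy or Jensen steps is required, since you only need upper bounds on $h(\widetilde X|U)+h(Y|U)$.

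Your resulting bound, $h(\widetilde X,Y)-\log\bigl(2\pi e(1-\rho)\sigma_{\widetilde X}\sigma_Y\bigr)$, equals $C(\widetilde X_g;Y_g)+h(\widetilde X,Y)-h(\widetilde X_g,Y_g)$ exactly. Your version is self-contained and shows that the only penalty is the non-Gaussianness of $\widetilde X$, because $\widetilde Z$ is Gaussian. The citation is shorter and covers general vector covariances.

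For $I(\widetilde X;Y)=h(Y)-h(\widetilde Z)$ you coincide with the paper. For $p_Y$, the paper just cites \cite[Lemma~3.1]{NTNGauss}, and your convolution is the content of that lemma. However, your step ``complete the square to factor out $\phi(y/\sigma_Y)$'' is wrong, and the promised matching of constants will fail.

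The truncated density is $\phi(x/\sigma_X)/(\sigma_X\erf(\beta/\sqrt{2}))$ on $[-C,C]$, with the \emph{parent} standard deviation $\sigma_X$. Completing the square therefore produces $\phi(y/s)/s$ with $s^2=\sigma_X^2+\sigma_{\widetilde Z}^2$, not $\sigma_Y^2=\sigma_{\widetilde X}^2+\sigma_{\widetilde Z}^2$; recall $\sigma_{\widetilde X}^2=\sigma_X^2(1-2\gamma(\beta))<\sigma_X^2$. You obtain
\[
p_Y(y)=\frac{\phi(y/s)}{s\,\erf\big(\frac{\beta}{\sqrt{2}}\big)}\Big[\Phi\Big(\frac{\beta s^2-\sigma_X y}{\sigma_{\widetilde Z}\,s}\Big)-\Phi\Big(-\frac{\beta s^2+\sigma_X y}{\sigma_{\widetilde Z}\,s}\Big)\Big],\qquad \beta=\frac{C}{\sigma_X}.
\]
This coincides with \eqref{eq:pdfofpY} only after the substitution $\sigma_{\widetilde X}\to\sigma_X$, $\sigma_Y\to s$, $\bar\beta\to\beta$. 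That is, it matches only when the truncated Gaussian is parametrized by its parent law, as in the cited lemma.

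Under the paper's stated definitions, the printed formula describes a differently truncated Gaussian plus noise, not $Y$. There $\sigma_{\widetilde X}$ is the truncated standard deviation and $\bar\beta=C/\sigma_{\widetilde X}^2$, which is not even dimensionless. So derive and state the formula above rather than forcing a match. \eqref{eq:MInumeric} itself is unaffected, since it only needs the correct $p_Y$.
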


\begin{proof}[Proof Sketch:]
    The lower bound on the WCI follows from \cite[Theorem~1]{GastparLoweronWCI}, establishing the following lower bound
    \begin{align}
        C(\widetilde{X}; Y) \geq\left\{ C(\widetilde{X}_g; Y_g) + h(\widetilde{X}, Y) - h(\widetilde{X}_g, Y_g)\right\}^+
    \end{align}
    where we have $(\widetilde{X}_g, Y_g) \sim \mathcal{N}(0, K_{\widetilde{X}Y})$ given that the joint pdf $p_{\widetilde{X}, Y}$ satisfies the cross-covariance matrix constraint $K_{\widetilde{X} Y}$. Since $(\widetilde{X}_g, Y_g)$ are jointly Gaussian, we obtain \cite{WCI}
    \begin{align}
        C(\widetilde{X}_g; Y_g) = \frac{1}{2} \log \left( \frac{1 + |\rho_{\widetilde{X}Y}|}{1 - |\rho_{\widetilde{X}Y}|} \right)
    \end{align}
    where we have
    \begin{align}
        \rho_{\widetilde{X}Y} = \frac{\sigma_{\widetilde{X}}}{\sigma_{Y}}=\frac{\sigma_{\widetilde{X}}}{\sqrt{\sigma_{\widetilde{X}}^2+\sigma_{\widetilde{Z}}^2}}
    \end{align}
    as the correlation coefficient between $\widetilde{X}$ and $Y$. Furthermore, we obtain
    \begin{align}
        &h(\widetilde{X}_g, Y_g) = \log(2\pi e\sigma_{\widetilde{X}}\sigma_{\widetilde{Z}}) = \log(2\pi e\sigma_{X}\sqrt{1-2\gamma(\beta)}\sigma_{\widetilde{Z}})
    \end{align}
    which follows by (\ref{eq:sigmaXsquarevssigmaX}). Similarly, we have
    \begin{align}
        &h(\widetilde{X},Y) = h(\widetilde{X}) + h(\widetilde{Z}) = \log\big(2\pi e \sigma_X \erf\big(\frac{\beta}{\sqrt{2}}\big)\sigma_{\widetilde{Z}}\big) - \gamma(\beta)
    \end{align}
    which follows from the properties of the truncated Gaussian random variable $\widetilde{X}$.

    Next, consider the mutual information 
    \begin{align}
        I(\widetilde{X};Y)=h(Y)-h(\widetilde{Z}) = h(Y) - \log(\sqrt{2\pi e}\sigma_{\widetilde{Z}})
    \end{align}
    which can be calculated numerically from the pdf $p_Y$ given in (\ref{eq:pdfofpY}), which is addressed next. We have $Y^n=\widetilde{X}^n+\widetilde{Z}^n$, where $\widetilde{X}^n$ is i.i.d. truncated Gaussian, $\widetilde{Z}^n$ is i.i.d. Gaussian, and $\widetilde{X}^n$ and $\widetilde{Z}^n$ are independent. Therefore, we can obtain the pdf of a sum of independent Gaussian and truncated Gaussian random variables by using \cite[Lemma~3.1]{NTNGauss}.
\end{proof}

{Note that} \eqref{eq:LoweronWCI} can be tighter than the generic inequality in \eqref{eq:MIlessthanWCI}, i.e., the error between the WCI $C(\widetilde{X}; Y)$ and (\ref{eq:LoweronWCI}) can be smaller than the error between the WCI $C(\widetilde{X}; Y)$ and  $I(\widetilde{X};Y)$. A random search sweep identified parameter regimes where \eqref{eq:LoweronWCI} dominates \eqref{eq:MIlessthanWCI}. Table~\ref{table:CompareMIandWCI} lists such examples, with $I(\widetilde{X};Y)$ evaluated numerically via \eqref{eq:MInumeric}.

\begin{table*}[t]
\centering
\captionsetup{
    labelsep=colon, 
    justification=centering,
    font=small, 
}
\bgroup
\def\arraystretch{1.4}
\normalsize
\begin{tabular}{|c|c|c|c|c|c|}
\hline
\rowcolor{blue!15!white}
$\sigma_X$ & $\epsilon$ & $\delta$  & WCI Bound (\ref{eq:LoweronWCI})& $I(\widetilde{X};Y)$ (\ref{eq:MInumeric}) & (\ref{eq:LoweronWCI}) / (\ref{eq:MInumeric})\\
\hline
 0.4938  &   0.8918  &   0.0097  &  0.0324  &   0.0019  & 17.05\\
 \hline
 0.4451  &  0.7917   & 0.0058  &   0.0307  &   0.0012 & 25.58\\ 
 \hline
0.6112   &  0.9256  &   0.0016   &   0.0029   &   0.0019 & 1.53\\
\hline
0.2064   &  0.6399  &   0.0023   &    0.0186  &   0.0003 & 62.00\\
\hline
0.3839   &  0.3637  &   0.0061   &    0.0127  &   0.0002 & 63.50\\
\hline
0.4947   &  0.9884  &   0.0019   &    0.0298  &    0.0018 & 16.56\\
\hline
0.3280   &   0.4663 &    0.0032   &    0.0197 &    0.0002 & 98.50\\ 
\hline
0.3280   &   0.1266 &    0.0032   &   0.0038  &    $1.8\times10^{-5}$ & 214.19\\
\hline
\end{tabular}
\egroup
\vspace{0.0cm}
\caption{WCI lower bound computed from (\ref{eq:LoweronWCI}), numerical computations of $I(\widetilde{X};Y)$ from (\ref{eq:MInumeric}), and their ratios {for random search sweep identified parameters, for which \eqref{eq:LoweronWCI} dominates \eqref{eq:MIlessthanWCI}. These results} illustrate the communication cost gains due to common randomness for the $(\epsilon,\delta)$-LDP constraints imposed.}
\label{table:CompareMIandWCI}
\vspace{-0.1cm}
\end{table*}

Table~\ref{table:CompareMIandWCI} illustrates that common randomness can significantly reduce the communication cost for computing a randomized function output, corresponding to satisfying an $(\epsilon,\delta)$–LDP constraint, by up to \emph{214‑fold} by leveraging semantic communication methods in the RDFC framework. Table~\ref{table:CompareMIandWCI} also reveals a pattern: Better LDP (i.e., smaller $\epsilon$) and smaller input variance (i.e., smaller $\sigma_{X}$) amplify this gain, driving the lower bound on the ratio $C(\widetilde{X};Y)/I(\widetilde{X};Y)$ even higher.

\section{Achievable LDP Parameters Using Non-Asymptotic RDFC Methods}\label{sec:nonasympLDP}
{In this section, we study the finite-blocklength behavior of RDFC under a fixed communication rate. Specifically, we consider any rate $R > I(\tilde{X};Y)$ that lies strictly inside the coordination-randomness region when there is sufficient common randomness. In the RDFC framework, an $(\epsilon,\delta)$-LDP constraint is imposed via the choice of the target joint law $Q_{\widetilde{X}Y}$, as illustrated in Section~\ref{sec:LDP}. However, the encoder-decoder pairs used for RDFC cannot approximate $Q_{\widetilde{X}Y}$ exactly for non-asymptotic regimes, i.e., when $n$ is finite. Our goal in this section is to quantify how fast the induced joint law of a RDFC encoder-decoder pair converges in total variation to the target  law, and hence how fast the effective LDP parameters converge to the desired $(\epsilon,\delta)$ as $n$ grows. Thus, we next establish achievable LDP parameters for a given target joint law synthesizing $(\epsilon,\delta)$-LDP by using a RDFC encoder-decoder pair when there is sufficient common randomness to minimize the communication rate.

Define
\begin{align}
   &\rho^\star := \argmax_{0<\rho\le\frac{1}{2}}
                 \rho(R-I_{\frac{1}{1-\rho}}(\widetilde{X};Y)).
\end{align}

}

\begin{theorem}\label{theo:LDP-parameters}
Let $Q_{\widetilde{X}Y}$ be a target joint law synthesizing $(\epsilon,\delta)$-LDP. Assume that {a given rate satisfies} $R>I(\widetilde{X};Y)>0$ and the moment generating function of the information density $\imath_{\widetilde{X};Y}(\widetilde{x},y)$ is finite in the neighborhood of origin. For every $n\geq 1$, there exists an RDFC encoder-decoder pair such that, for any $a>1$, we obtain
\begin{align}
   &\bigl\| P_{\widetilde{X}^nY^n}{({\widetilde{x}^n,y^n})}-\prod_{i=1}^n Q_{\widetilde{X}Y}{({\widetilde{x}_i,y_i})}\bigr\|_{\mathrm{TV}}\nonumber\\
   &\qquad\le\;a\Delta_n:= aK\begin{cases}
             e^{-\frac{n}{2}\,(R-I_{2}(\widetilde{X};Y))}, &
            \text{ if } \rho^{\star}=\dfrac12,\\[6pt]
            n^{-\frac{1-\rho^\star}{2}}e^{-n\rho^\star\!\big(R-I_{\frac{1}{1-\rho^\star}}(\widetilde{X};Y)\big)}, &\text{ if } \rho^{\star}<\dfrac12
        \end{cases}\label{eq:RDFCYassaeresult}
\end{align}
where $K>0$ is some constant. Consequently, the induced joint law $P_{\widetilde{X}^nY^n}$ synthesizes $(\epsilon,\delta_n)$-LDP with
\begin{align}
\delta_n=\delta+2(e^{\epsilon}+1)a\Delta_n.\label{eq:deltanfromdelta}
\end{align}
\end{theorem}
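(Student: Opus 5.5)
The proof has two parts. The first is a finite-blocklength channel-synthesis bound. The second converts the resulting total variation bound into an LDP guarantee.

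For the first part, we use the standard likelihood-encoder construction with sufficient common randomness. Given the common randomness $C$, we draw a random codebook of $2^{nR}$ i.i.d. sequences $Y^n(s)\sim Q_Y^n$. Upon observing $\widetilde{x}^n$, the encoder selects index $s$ with probability proportional to $\prod_i Q_{\widetilde{X}|Y}(\widetilde{x}_i|y_i(s))$, and the decoder outputs $Y^n(S)$. This is the soft-covering setting with $U=Y$, so the relevant rate is $I(\widetilde{X};Y)$. We then invoke the known one-shot/exponential soft-covering bound. By Yassaee-type arguments, via Jensen, the expected TV distance over the random codebook is bounded by
\[
\mathbb{E}\big[e^{\rho(\imath_{\widetilde{X}^n;Y^n}-nR)}\big]\ \text{-type terms},
\]
and this yields $\exp(-n\rho(R-I_{1/(1-\rho)}(\widetilde{X};Y)))$ for every $\rho\in(0,1/2]$. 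The optimization over $\rho$ gives $\rho^\star$. The boundary case $\rho^\star=1/2$ gives the $I_2$ exponent. In the interior case $\rho^\star<1/2$, a refined bound applies: we split the information density at a threshold and apply an exact (Bahadur--Rao type) large-deviations estimate to the tail term, which produces the polynomial prefactor $n^{-(1-\rho^\star)/2}$. The finiteness of the moment generating function near the origin is exactly what makes the $\alpha$-mutual information finite, and it justifies the strong large-deviations asymptotics uniformly in $n$, with all constants absorbed into $K$. This gives $\mathbb{E}_{\mathcal{C}}[\mathrm{TV}]\le\Delta_n$.

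To pass from the codebook average to an existence statement with the factor $a>1$, we use Markov's inequality. It gives $\Pr[\mathrm{TV}>a\Delta_n]\le 1/a<1$, so some codebook, i.e., some encoder-decoder pair, achieves $\mathrm{TV}\le a\Delta_n$. We expect the main obstacle to be the interior case. There, obtaining the exact $n^{-(1-\rho^\star)/2}$ prefactor requires careful exponential tilting of the information density and a local CLT under the tilted measure. We would first try to cite the refined soft-covering exponent directly rather than rederive it.

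For the second part, we fix $\widetilde{x},\widetilde{x}'$ and an event $\mathcal{A}$. Let $P$ denote the induced conditional of the synthesized mechanism and $Q$ the target conditional. Then
\[
P(\mathcal{A}|\widetilde{x})-e^\epsilon P(\mathcal{A}|\widetilde{x}')\le [Q(\mathcal{A}|\widetilde{x})-e^\epsilon Q(\mathcal{A}|\widetilde{x}')]+|P(\mathcal{A}|\widetilde{x})-Q(\mathcal{A}|\widetilde{x})|+e^\epsilon|P(\mathcal{A}|\widetilde{x}')-Q(\mathcal{A}|\widetilde{x}')|.
\]
The bracketed term is at most $\delta$ by the target LDP. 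Each conditional deviation is bounded by twice the joint TV distance. This holds after relating joint TV to per-input conditional deviations, using that the input marginal is preserved, which is how the factor $2$ arises. Each deviation is therefore at most $2a\Delta_n$, so the total excess is at most $2(e^\epsilon+1)a\Delta_n$. This yields the value of $\delta_n$ in \eqref{eq:deltanfromdelta}.
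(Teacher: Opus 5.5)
Your first part matches the paper's: a likelihood encoder, Yassaee's refined soft-covering bound to get $\mathbb{E}[\mathrm{TV}]\le\Delta_n$, then Markov's inequality to fix a pair. The gap is in the conversion to LDP.

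Your claim that each conditional deviation $|P(\mathcal{A}|\widetilde x)-Q(\mathcal{A}|\widetilde x)|$ is at most twice the joint TV distance, ``using that the input marginal is preserved'', is false. When $P_{\widetilde X}=Q_{\widetilde X}$, the joint TV is the $Q_{\widetilde X}$-average of the conditional TV distances: $\|P_{\widetilde XY}-Q_{\widetilde XY}\|_{\mathrm{TV}}=\sum_{\widetilde x}Q_{\widetilde X}(\widetilde x)\,\|P_{Y|\widetilde X=\widetilde x}-Q_{Y|\widetilde X=\widetilde x}\|_{\mathrm{TV}}$. It therefore controls a single input only after dividing by that input's probability: $|P(\mathcal A|\widetilde x)-Q(\mathcal A|\widetilde x)|\le \|P_{\widetilde XY}-Q_{\widetilde XY}\|_{\mathrm{TV}}/Q_{\widetilde X}(\widetilde x)$. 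A rare input whose conditional is badly wrong contributes only its probability times its deviation to the joint TV. So no input-independent factor such as $2$ can hold, and LDP is precisely a worst-case-over-inputs requirement.

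The paper closes this gap with two ingredients, both missing from your proposal.
\begin{enumerate}
\item It projects the $n$-letter TV bound onto a single coordinate, which is allowed because TV is non-expansive. The object being bounded is then the single-letter conditional $P_{Y|\widetilde X}$.
\item It assumes $m_{\widetilde X}=\operatorname{ess\,inf}p_{\widetilde X}>0$ on the support. This gives $|P_{Y|\widetilde X=\widetilde x}(S)-Q_{Y|\widetilde X=\widetilde x}(S)|\le 2a\Delta_n/m_{\widetilde X}$, and the factor $1/m_{\widetilde X}$ is then absorbed into $K$ to reach the displayed $\delta_n$.
\end{enumerate}
The $2$ in the paper comes from $\int|p-q|=2\,\mathrm{TV}$, not from marginal preservation.

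The single-letter reduction is essential, not cosmetic. If you meant the $n$-letter mechanism $P_{Y^n|\widetilde X^n}$, the required factor would be $1/Q^n_{\widetilde X}(\widetilde x^n)$. This can grow like $m_{\widetilde X}^{-n}$ and overwhelm the exponential decay of $\Delta_n$, so no bound of the stated form would follow.

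Your three-term triangle decomposition is fine once each conditional deviation is bounded by $2a\Delta_n/m_{\widetilde X}$ rather than $2a\Delta_n$.
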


{
\begin{remark}
Theorem~\ref{theo:LDP-parameters} should be interpreted in a fixed-rate, large-deviations sense. For any rate $R > I(\tilde{X};Y)$ in the interior of the coordination-randomness region, we show that there exists, for each blocklength $n$, a random RDFC encoder-decoder pair of rate $R$ whose induced joint law satisfies the total-variation bound in (\ref{eq:RDFCYassaeresult}). The quantity $\Delta_n$ decays exponentially fast in $n$, and consequently the synthesized mechanism is $(\epsilon,\delta_n)$-LDP with $\delta_n$ given in (\ref{eq:deltanfromdelta}), so that $\delta_n \to \delta$ exponentially. In particular, the rate $R$ itself does not depend on $n$, as our non-asymptotic analysis controls the $n$-dependence of the approximation quality in total variation (and thus of $\delta_n$).
\end{remark}
}

{
Note that the results of Theorem~\ref{theo:LDP-parameters} can be applied to obtain achievable LDP parameters for the RDFC model considered in Section~\ref{sec:LDP}.
}

\begin{proof}[Proof Sketch:]
Consider the likelihood encoder-decoder pair constructed in an i.i.d. manner similar to \cite[Sec.~III]{CuffLikelihoodEncoder}. Using
\cite[Theorem.~3]{YassaeeExtendedSoftCovering} that is an extension of the soft covering lemma for discrete random variables \cite[Lemma~IV.1]{CuffChannelSynthesis}, we have
\begin{equation}
   \mathbb{E}\Big[\bigl\|P_{\widetilde{X}^nY^n}{({\widetilde{x}^n,y^n})}-\prod_{i=1}^n Q_{\widetilde{X}Y}{({\widetilde{x}_i,y_i})}\bigr\|_{\mathrm{TV}}\Big]
   \;\le\;
   \Delta_n.
   \label{eq:explicit-Delta}
\end{equation}
Moreover, by Markov's inequality, the existence of a likelihood encoder-decoder pair that satisfies (\ref{eq:RDFCYassaeresult}) follows for any $a>1$.

Denote the pdf of $\widetilde{X}$ as $p_{\widetilde X}$ and set
\begin{equation}
  m_{\widetilde X}
  :=\operatorname*{ess\,inf}_{\widetilde{x}\in\operatorname{supp}(P_{\widetilde X})}
        p_{\widetilde X}(\widetilde{x})>0 
  \label{eq:def_mX}
\end{equation}
{where $\operatorname*{ess\,inf}$ denotes the essential infimum}. Since total variation distance is non‑expansive under measurable mappings \cite[Eq.~(9)]{TyagiE2209Lecture2}, projecting the pair $(\widetilde X^{n},Y^{n})$ to its first coordinate yields
\begin{equation}
  \bigl\|P_{\widetilde X Y}{({\widetilde{x}^n,y^n})}-Q_{\widetilde X Y}{({\widetilde{x}_i,y_i})}\bigr\|_{\mathrm{TV}}
  \;\le\;
  \bigl\|P_{\widetilde X^{n}Y^{n}}{({\widetilde{x}^n,y^n})}
         -\prod_{i=1}^nQ_{\widetilde X Y}{({\widetilde{x}_i,y_i})}\bigr\|_{\mathrm{TV}}
  \overset{(a)}{\leq} a\Delta_n
  \label{eq:single_letter_TV}
\end{equation}
where $(a)$ follows by (\ref{eq:RDFCYassaeresult}). For any \(\widetilde x\in\operatorname{supp}(P_{\widetilde X})\) and measurable set \(S\subseteq\mathcal Y\), we have
\begin{align}
  \bigl|P_{Y\mid\widetilde X=\widetilde x}(S)
        -Q_{Y\mid\widetilde X=\widetilde x}(S)\bigr|
  &=\frac{1}{p_{\widetilde X}(\widetilde x)}
    \Bigl|
      \int_S\!\!(p_{\widetilde X Y}(\widetilde x,y)
                      -q_{\widetilde X Y}(\widetilde x,y))\,dy
    \Bigr| \nonumber\\
  &\le\frac{2}{m_{\widetilde X}}\,
        \bigl\|P_{\widetilde X Y}-Q_{\widetilde X Y}\bigr\|_{\mathrm{TV}}.
  \label{eq:firstboundonPYgivenXtilde}
\end{align}
Combining \eqref{eq:single_letter_TV} and (\ref{eq:firstboundonPYgivenXtilde}) gives
\begin{equation}
  \bigl|P_{Y|\widetilde X=\widetilde x}(S)
         -Q_{Y|\widetilde X=\widetilde x}(S)\bigr|
  \;\le\;\frac{2a\Delta_n}{m_{\widetilde X}}.
  \label{eq:condTV}
\end{equation}
For any
\(\widetilde x,\widetilde x'\in\operatorname{supp}(P_{\widetilde X})\), for $\widetilde{x}\neq \widetilde{x}'$, and measurable set \(S\subseteq\mathcal Y\), using~\eqref{eq:condTV}
twice and the fact that \(Q_{Y|\widetilde X}\) provides
\((\varepsilon,\delta)\)-LDP, we obtain
\begin{align}
 P_{Y|\widetilde X=\widetilde x}(S)
 &\le Q_{Y|\widetilde X=\widetilde x}(S)+\frac{2a\Delta_n}{m_{\widetilde X}}\\[2pt]
 &\le e^{\varepsilon}Q_{Y|\widetilde X=\widetilde x'}(S)
      +\delta+\frac{2a\Delta_n}{m_{\widetilde X}} \\[2pt]
 &\le e^{\varepsilon}\Bigl[
        P_{Y|\widetilde X=\widetilde x'}(S)
        +\frac{2a\Delta_n}{m_{\widetilde X}}
      \Bigr]+\delta+\frac{2a\Delta_n}{m_{\widetilde X}} \\[2pt]
 &= e^{\varepsilon}P_{Y|\widetilde X=\widetilde x'}(S)
    +\delta
    +\frac{2\bigl(e^{\varepsilon}+1\bigr)a\Delta_n}{m_{\widetilde X}}.
\end{align}
Absorbing \(1/m_{\widetilde X}\) into the constant \(K\) yields \eqref{eq:deltanfromdelta}, so the synthesized
mechanism is \((\varepsilon,\delta_n)\)-LDP.
\end{proof}

{
We note that Theorem~\ref{theo:LDP-parameters} is an existential, non-constructive result, as the bound in (\ref{eq:RDFCYassaeresult}) characterizes the exponential decay of the approximation error but does not correspond to a specific code family. Thus, beyond the fundamental insight that convergence to the target joint law (and to $(\epsilon,\delta)$-LDP) can be exponentially fast, obtaining practical finite-blocklength performance curves would require explicit RDFC code constructions and their numerical evaluation, which we identify as an interesting direction for future work.
}

\section{RDFC for Symmetric Random Response}\label{sec:SymmetricforRR}
\begin{figure}[t]
    \vspace*{-0.0cm}
    \centering
    \includegraphics[width=0.8\textwidth]{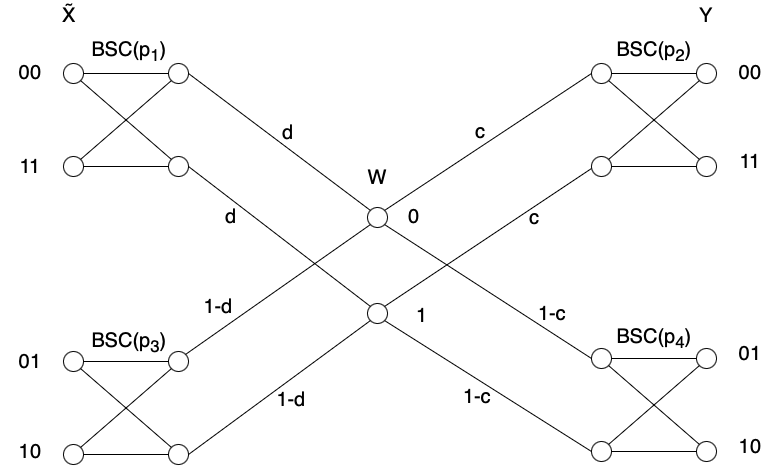}
    \caption{An example random‑response scenario in which the joint distribution $Q_{\widetilde{X}Y}$ can be expressed as a mixture of BSCs. For simplicity, assume that $W$ is uniformly distributed.}\label{fig:symmetric}
    \vspace{-0.0cm}
\end{figure}

We next consider a discrete random‑response scenario, which can be employed as an LDP mechanism \cite{LDPTutorial}, where the transmitter observes $\widetilde{X}^{n}$ and the receiver must generate the random response $Y^{n}$ so that $(\widetilde{X}^{n},Y^{n})$ follows a pmf $Q_{\widetilde{X}Y}^{n}$ that can be represented as in Fig.~\ref{fig:symmetric}. For simplicity, Fig.~\ref{fig:symmetric} shows the 2‑bit‑output case. The binary random variable $W$ in the middle, assumed to be uniformly distributed, drives two channels $P_{\widetilde{X}|W}$ and $P_{Y|W}$, and each of these channels can be written as a mixture of binary symmetric channels (BSCs), making the overall pmf symmetric.

Let $\widetilde{X}=(\widetilde{X}_{1},\widetilde{X}_{2})$ and $Y=(Y_{1},Y_{2})$.  The random response model in Fig.~\ref{fig:symmetric} then admits the following representation
\begin{align}
	\begin{bmatrix}
	\widetilde{X}_1\\
	\widetilde{X}_2\\
	Y_1\\
    Y_2\\
	\end{bmatrix}
	= W 
	\begin{bmatrix}
	1\\
	1\\
	1\\
    1\\
	\end{bmatrix}
	\oplus 
	\begin{bmatrix}
	B_1\\
	B_2\\
	B_3\\
    B_4\\
	\end{bmatrix}\label{eq:corrBSCs}
\end{align}
where $\oplus$ denotes modulo‑2 addition. The noise bits $B_{1}$–$B_{4}$ are mutually dependent but remain independent of $W$.  Because $W$ is uniform, the symmetry property
\begin{align}
    P_{\widetilde{X}_1\widetilde{X}_2Y_1Y_2}(\widetilde{x}_1,\widetilde{x}_2,y_1,y_2) = P_{\widetilde{X}_1\widetilde{X}_2Y_1Y_2}(\bar{\widetilde{x}}_1,\bar{\widetilde{x}}_2,\bar{y}_1,\bar{y}_2)\label{eq:mixtureofBSCSfor3}
\end{align}
holds, where $\bar{y}=1-y$.

Define
\begin{align}
	&q_{Y_1Y_2} = P_{Y_1Y_2|W}(y_1,y_2|0), 
\end{align}
so the subchannel probabilities are
\begin{align}
	&c = q_{00} + q_{11},\\
    &1-c = q_{01} + q_{10},
\end{align}
with the crossover probabilities of the BSCs given as
\begin{align}
    &p_2=q_{11}/c,\\
    &p_4=q_{10}/(1-c).
\end{align}
One can define analogous quantities for $P_{\widetilde{X}|W}$. The same construction extends to joint pmfs that decompose into more than two BSCs \cite{bizimMMMMTIFS}. When a given pmf $Q_{\widetilde{X}Y}$ satisfies the symmetry condition in \eqref{eq:mixtureofBSCSfor3}, computing lower bounds on $C(\widetilde{X};Y)$ becomes considerably simpler, as we illustrate next.

We first state a lower bound on the WCI $C(\widetilde{X};Y)$ by summarizing Theorems~4 and 5 in \cite{WitsenhausenWCI}. Define
\begin{align}
 &\alpha = \sqrt{\frac{kx - 1}{k - 1}},\\
 &x_k^* = \frac{k^2 - 3k + 3}{k(k - 1)},\\
 &f_{1}(x)\! =\! - \frac{2}{k} \big[ (1 + (k - 1)\alpha) \log(1 + (k - 1)\alpha)\nonumber\\
 &\qquad\qquad\qquad+ (k \!-\! 1)(1 \!-\! \alpha) \log(1 \!-\! \alpha) \big] \!+\! 2 \log k,\\
 & f_{2}(x) = 2 \log k - 2(k - 1) \frac{\log(k - 1)}{k - 2} \left(x - \frac{1}{k}\right).
\end{align}
Consider a $k\times k$ joint probability distribution matrix $\textbf{Q}_{\widetilde{X}Y}$ with elements $Q_{i,j}$, representing the pmf $Q_{\widetilde{X}Y}$. For example, we have $k=4$ for the distribution shown in Fig.~\ref{fig:symmetric}. Note that it suffices to consider square matrices $\textbf{Q}_{\widetilde{X}Y}$, as adding or removing rows or columns of zeros does not affect the WCI calculations \cite{WitsenhausenWCI}.

Denote the trace operation as tr$(\cdot)$, and define the maxitrace operation as
\begin{align}
    \maxtr(\textbf{Q}) = \max_{\pi\in\mathcal{S}_k}\sum_{i=1}^k Q_{i,\pi(i)}
\end{align}
where $\mathcal{S}_k$ is the set of all permutations of the indices $\{1,2,\ldots, k\}$.

\begin{theorem}\label{theo:WitsenLower}
    Let $k>2$ and $1/k\leq x \leq 1$, and
    \begin{itemize}
        \item if $x_k^*\leq x\leq 1$, consider 
        \begin{align}
            f(x) \vcentcolon=  f_1(x)
        \end{align}
        \item and, otherwise, if $1/n\leq x\leq x_k^*$, consider
        \begin{align}
            f(x) \vcentcolon=  f_2(x).
        \end{align}
    \end{itemize}
    Given a probability distribution matrix $\textbf{Q}_{\widetilde{X}Y}$, we obtain the following lower bound on $C(\widetilde{X};Y)$
    \begin{align}
        C(\widetilde{X};Y)\geq H(\widetilde{X},Y) - f(\maxtr(\textbf{Q}_{\widetilde{X}Y})).\label{eq:WitsenhausenWCIBound}
    \end{align}
\end{theorem}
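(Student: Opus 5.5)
We would prove Theorem~\ref{theo:WitsenLower} in three steps, following Witsenhausen's argument. First we rewrite the WCI as an entropy difference. Then we bound $H(\widetilde{X},Y|U)$ from above by a concave function of the maxitrace. Finally we use concavity to pass from the per-$u$ bounds to the averaged one.

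\emph{Step 1: entropy rewriting.} For any $U$ with $\widetilde{X}-U-Y$ we have $I(\widetilde{X},Y;U)=H(\widetilde{X},Y)-H(\widetilde{X},Y|U)$. Hence it suffices to show $H(\widetilde{X},Y|U)\le f(\maxtr(\textbf{Q}_{\widetilde{X}Y}))$ for every admissible $U$.

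\emph{Step 2: the per-$u$ bound.} Conditioned on $U=u$, the joint pmf is a product $\textbf{P}_u=\mathbf{a}_u\mathbf{b}_u^{T}$, so $H(\widetilde{X},Y|U=u)=H(\mathbf{a}_u)+H(\mathbf{b}_u)$. Fix a permutation $\pi$ attaining $\maxtr(\textbf{Q})$ and set $t_u=\sum_i a_u(i)b_u(\pi(i))$, which is the $\pi$-trace of $\textbf{P}_u$. The core lemma is that for product distributions on $k\times k$ with $\pi$-trace at least $t$,
\begin{align}
H(\mathbf{a})+H(\mathbf{b})\le g(t).
\end{align}
Here $g$ is the maximal value: the symmetric solution $f_1$ for large $t$, and for small $t$ its concave envelope, the tangent line $f_2$ through the point $(1/k,2\log k)$. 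We would prove the lemma by reducing, via Schur-concavity and symmetrization, to vectors $\mathbf{a}=\mathbf{b}$ having one large entry $(1+(k-1)\alpha)/k$ and $k-1$ equal entries $(1-\alpha)/k$. Lagrange conditions should give the form of $f_1$, with trace $x=(1+(k-1)\alpha^2)/k$, which inverts to $\alpha=\sqrt{(kx-1)/(k-1)}$. We then verify that $f_1$ is concave on $[x_k^*,1]$, that $f_2$ is the linear piece tangent at $x_k^*$, and hence that $f$ is concave and nonincreasing on $[1/k,1]$.

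\emph{Step 3: averaging.} Since $\textbf{Q}=\sum_u P_U(u)\textbf{P}_u$, we get $\sum_u P_U(u)t_u=\operatorname{tr}_\pi(\textbf{Q})=\maxtr(\textbf{Q})$. Any $t_u<1/k$ case can be treated as $1/k$ because $f$ is maximal there. Then Jensen's inequality gives
\begin{align}
H(\widetilde{X},Y|U)\le\sum_u P_U(u) f(t_u)\le f\Big(\sum_u P_U(u)t_u\Big)=f(\maxtr(\textbf{Q})).
\end{align}
One subtlety must be checked: truncating $t_u$ at $1/k$ raises the average, and this is harmless because $f$ is nonincreasing. Taking the infimum over $U$ then yields \eqref{eq:WitsenhausenWCIBound}.

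The main obstacle is the lemma in Step 2: showing that the maximizer of $H(\mathbf{a})+H(\mathbf{b})$ under the trace constraint has the claimed symmetric one-spike form, and identifying the threshold $x_k^*$ where the envelope switches from $f_1$ to the line $f_2$. To handle it we would rely on Witsenhausen's Theorems~4--5 in \cite{WitsenhausenWCI}, checking in particular the concavity computation of $f_1$ in $x$.
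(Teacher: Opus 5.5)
Your proposal is correct and takes essentially the paper's route: the paper gives no independent proof and obtains the bound by summarizing Witsenhausen's Theorems~4 and~5, which are the results you invoke for the extremal lemma in Step~2. Your Steps~1 and~3 are the standard scaffolding of that argument: the conditional matrices are rank one ($\mathbf{a}_u\mathbf{b}_u^{T}$), their $\pi$-traces average to $\maxtr(\textbf{Q}_{\widetilde{X}Y})$, and Jensen applies because $f$ is concave and nonincreasing and dominates the true per-$u$ maximum $f_1$, including on $[1/k,x_k^*]$.
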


Certain symmetric distributions satisfy (\ref{eq:WitsenhausenWCIBound}) with equality \cite{WitsenhausenWCI}.  Restricting attention to joint pmfs $Q_{\widetilde{X}Y}$ that conform to the BSC‑mixture model of Fig.~\ref{fig:symmetric} and (\ref{eq:corrBSCs}), we evaluate the bound given in Theorem~\ref{theo:WitsenLower} and compare it with the mutual information $I(\widetilde{X};Y)$. A random search sweep reveals parameter ranges where (\ref{eq:WitsenhausenWCIBound}) is tighter than the generic inequality in (\ref{eq:MIlessthanWCI}); representative examples are listed in Table~\ref{tab:WitsenhausenResults}.

Table~\ref{tab:WitsenhausenResults} shows that introducing common randomness can remarkably reduce the communication load required to synthesize a randomized function output. Moreover, this gain becomes larger as the mutual information $I(\widetilde{X};Y)$ becomes smaller, i.e., the lower bound on the ratio $C(\widetilde{X};Y)/I(\widetilde{X};Y)$ increases accordingly.  
Even without any common randomness, RDFC methods still offer large communication cost savings, as the WCI lower bound can be as much as $116.55$ times less than the lossless communication rates $H(\widetilde{X})$ and $H(Y)$. Thus, our results illustrate remarkable gains over classical function computation methods by using semantic communication methods for RDFC (both with and without common randomness), which results in {substantial} gains in energy efficiency.

These results motivate further research on, e.g., the minimum achievable communication load for a fixed amount of common randomness, which is important, as common randomness is a true commodity for devices with limited storage.

\begin{table*}[t!]
\centering
\captionsetup{
    labelsep=colon, 
    justification=centering,
    font=small, 
}
\normalsize
\setlength\tabcolsep{1.8pt}      
\renewcommand{\arraystretch}{1.1}
\begin{tabular}{|c|c|c|c|c|c|c|}
\rowcolor{blue!15!white}
\hline
$(p_1, p_2, p_3, p_4, c, d)$ & WCI (\ref{eq:WitsenhausenWCIBound}) & $I(\widetilde{X};Y)$ & $\frac{(\ref{eq:WitsenhausenWCIBound})}{I(\widetilde{X};Y)}$&$H(\widetilde{X})$&$H(Y)$&$\frac{\min\{H(\widetilde{X}),H(Y)\}}{(\ref{eq:WitsenhausenWCIBound})}$\\
\hline
\small $(0.05, 0.45, 0.5, 0.25, 0.45, 0.4)$&0.0977  & 0.0238 & 4.10 & 1.3662  &   1.3813     &    13.984\\
\hline
\small $(0.05, 0.1, 0.2, 0.4, 0.45, 0.3)$  & 0.0998  & 0.0822  &  1.21 & 1.3040  &   1.3813   &    13.069\\
\hline
\small $(0.15, 0.25, 0, 0.05, 0.45, 0.4)$  & 0.3276   & 0.2570  & 1.27 & 1.3662   &  1.3813   &   4.170\\
\hline
\small $(0.25, 0.4, 0.1, 0.25, 0.5, 0.25)$ & 0.0967  & 0.0403 & 2.40 & 1.2555    & 1.3863    &     12.980\\
\hline
\small $(0.45, 0.4, 0.3, 0.1, 0.45, 0.3)$  & 0.1315   & 0.0216 & 6.09 & 1.3040   & 1.3813    &     9.916\\
\hline
\small $(0.3, 0.5, 0.3, 0, 0.5, 0.4)$  & 0.1364   & 0.0411 & 3.32 & 1.3662  &  1.3863 &         10.010\\
\hline
\small $(0.1, 0.05, 0.1, 0, 0.1, 0.05)$ & 0.5325   & 0.3601  & 1.48 & 0.8917  &  1.0182     &   1.675 \\
\hline
\small $(0.5, 0, 0.45, 0.3, 0.4, 0.45)$ & 0.0117  & 0.0014 & 8.36 & 1.3813  &  1.3662 &        116.550 \\
\hline
\end{tabular}
\vspace{0.0cm}
\caption{WCI lower bound computations from (\ref{eq:WitsenhausenWCIBound}), exact computations of $I(\widetilde{X};Y)$ from $Q_{\widetilde{X}Y}$, their ratios, lossless source coding rates $H(\widetilde{X})$ and $H(Y)$ to transmit $\widetilde{X}^n$ and $Y^n$, respectively, and the ratios $\min\{H(\widetilde{X}),H(Y)\}/(\ref{eq:WitsenhausenWCIBound})$. {These values are computed for randomly chosen parameters, for which (\ref{eq:WitsenhausenWCIBound}) provides valid lower bounds.}}
\label{tab:WitsenhausenResults}
\vspace{-0.4cm}
\end{table*}

\section{Conclusions}\label{sec:conclusions}
We introduced the randomized distributed function computation (RDFC) framework as a semantic communication approach that lets a transmitter convey just enough information for the receiver to synthesize a (privacy‑preserving) randomized function of the input data. By framing privacy mechanisms, such as Gaussian local differential privacy and symmetric random response, within a strong coordination model, we demonstrated that RDFC can reduce the required communication rate by up to two orders of magnitude compared to lossless compression methods. Note that such significant gains in the communication rate, both with and without common randomness, result in {substantial} gains in energy efficiency. We further showed, through new bounds and numerical examples, how sufficient amount of common randomness translated into ultra-efficient semantic communication rates, while a finite blocklength analysis confirmed that the resulting privacy guarantees tightened exponentially fast with the sequence length. These findings highlighted RDFC as a practical route to privacy-aware and energy-efficient communication in future distributed computation systems. {As an interesting direction for future work, recent variational formulations of the WCI \cite{WCIVariational} can be adapted to the RDFC setting to provide variational upper bounds on the WCI, complementing the lower bounds developed in this paper.}

\backmatter

\section*{Declarations}

\subsection*{Availability of data and materials}
Data sharing is not applicable to this article as no datasets were generated or analysed during the current study.
\subsection*{Competing interests}
The author declares that he has no competing interests.
\subsection*{Funding}
This work was supported in part by the ZENITH Research and Leadership Career Development Fund, Swedish Foundation for Strategic Research (SSF), and EU COST Action 6G-PHYSEC.
\subsection*{Author's contributions}
OG is the sole author who prepared, read, and approved the final manuscript.
\subsection*{Acknowledgements}
OG thanks anonymous reviewers of the conference version \cite{MyWIFS2024} of this work for their insightful comments, including the suggestion of varying only one parameter in Table~\ref{table:CompareMIandWCI} that led to the remarkable communication load gain result of $214$ times, as given in the last row of Table~\ref{table:CompareMIandWCI}.
\subsection*{Author's information}
OG received the B.Sc. degree (Highest Distinction) in Electrical and Electronics Engineering from Bilkent University, Turkey in 2011; M.Sc. (Highest Distinction) and Dr.-Ing. (Ph.D. equivalent) degrees in Communications Engineering both from the Technical University of Munich (TUM), Germany in 2013 and 2018, respectively. He was a Working Student in the Communication Systems division of Intel Mobile Communications (IMC), now Apple Inc., in Munich, Germany during November 2012 - March 2013. Onur worked as a Research and Teaching Assistant at TUM Chair of Communications Engineering (LNT) between February 2014 - May 2019. As a Visiting Researcher, among more than twenty Research Stays at Top Universities and Companies, he was at TU Eindhoven, Netherlands during February 2018 - March 2018. Onur was a Visiting Research Group Leader at Georgia Institute of Technology, Atlanta, USA during February 2022 - March 2022. He was also a Visiting Professor at TU Dresden, Germany during February 2023 - March 2023. Following Research Associate and Group Leader positions at TUM, TU Berlin, and the University of Siegen, he joined Linköping University in October 2022 as an ELLIIT Assistant Professor and obtained tenure as an Associate Professor leading the Information Theory and Security Laboratory (ITSL) in August 2024. He became a Swedish Docent (Dr.-habil.) of Information Theory in December 2023 and an IEEE Senior Member in July 2024. Since September 2025, Onur has been a Tenured Full Professor leading the Lehrstuhl für Nachrichtentechnik (Institute of Communications Engineering ) at TU Dortmund (TUDO), Germany and a Guest Professor at Linköping University, Sweden.

He has received the 2025 IEEE Information Theory Society – Joy Thomas Tutorial Paper Award, the 2023 ZENITH Research and Career Development Award, 2021 IEEE Transactions on Communications - Exemplary Reviewer Award, and the prestigious VDE Information Technology Society (ITG) 2021 Johann-Philipp-Reis Award. His research interests include distributed function computation, information-theoretic privacy and security, coding theory, integrated sensing and communication, and private learning. Among his publications is the book \emph{Key Agreement with Physical Unclonable Functions and Biometric Identifiers} (Dr. Hut Verlag, 2019). 

He serves as Associate Editor for \textsc{IEEE JOURNAL ON SELECTED AREAS IN COMMUNICATIONS}, \textsc{IEEE TRANSACTIONS ON COMMUNICATIONS}, and \textsc{ENTROPY} Journal, and was recently an Associate Editor of \textsc{EURASIP JOURNAL ON WIRELESS COMMUNICATIONS AND NETWORKING} and a Guest Editor of \textsc{IEEE JOURNAL ON SELECTED AREAS IN INFORMATION THEORY}. He also serves as a Board Member and Secretary of the IEEE Sweden VT/COM/IT Joint Chapter, and as a Working Group Leader for EU COST Action 6G-PHYSEC.

\bibliography{sn-bibliography}

\end{document}